\newtheorem{theorem}{Theorem}
\title{Folding Flat Crease Patterns with Thick Materials}
\author{Jason S. Ku\thanks{Address all correspondence to this author.}
    \affiliation{
    	Field Intelligence Laboratory \\
	Massachusetts Institute of Technology\\
	Cambridge, MA 02139 \\
    Email: jasonku@mit.edu
    }	
}
\author{Erik D. Demaine
    \affiliation{Computer Science and Artificial Intelligence Laboratory\\
	Massachusetts Institute of Technology\\
	Cambridge, MA 02139 \\
	Email: edemaine@mit.edu
    }
}
\begin{document}

\maketitle    

\begin{abstract}
{\it Modeling folding surfaces with nonzero thickness is of practical interest for mechanical engineering. There are many existing approaches that account for material thickness in folding applications. We propose a new systematic and broadly applicable algorithm to transform certain flat-foldable crease patterns into new crease patterns with similar folded structure but with a facet-separated folded state. We provide conditions on input crease patterns for the algorithm to produce a thickened crease pattern avoiding local self intersection, and provide bounds for the maximum thickness that the algorithm can produce for a given input. We demonstrate these results in parameterized numerical simulations and physical models.  }
\end{abstract}

\section*{Introduction}

While much of the research in computational origami applies to folded surfaces with zero thickness (particularly structures that fold flat), modeling folding surfaces with nonzero thickness is of practical interest for mechanical engineering. Design approaches for folding thick material have many varied applications from kinetic architecture \cite{Tachi2} and solar panel deployment \cite{Schenk}, to robotics \cite{Balkcom} and nano-fabrication \cite{In}.  These applications have motivated research into the mathematics and mechanics of rigidly folding thick materials \cite{Huff, Miura1, Tachi1}. We discuss some of the existing techniques for taking into account material thickness in the following section.

In this paper, we propose a new approach for accommodating thickness that modifies certain existing crease patterns into new planar folding patterns, preserving some structure of the old crease pattern while folding a form whose facets are separated from one another in the final state. We describe a systematic and broadly applicable algorithm to transform an input flat-foldable crease pattern into a new crease pattern having a facet-separated, nearly flat folded state.

Our approach for converting flat foldings into facet-separated foldings replaces each flat crease in the input crease pattern by two parallel creases symmetrically offset about the original at a distance proportional to an assigned crease width satisfying certain properties of the original crease pattern. Instead of one crease folding flat with a turn angle of $180^\circ$, the two new creases have a turn angle of $90^\circ$. This crease widening creates difficulties at crease-pattern intersections since the offset creases no longer converge to a point. Material in the vicinity around each crease-pattern vertex is thus discarded to accommodate crease widening. While this modification creates holes in the material, it introduces extra degrees of freedom that can allow the widened creases to fold. Additionally the algorithm identifies and removes some surface material on one side of creases to avoid self-intersections. 

We provide conditions on input flat folded states for the algorithm to produce a thickened crease pattern avoiding local self intersection, namely that crease-pattern faces are convex and creases do not touch the insides of other creases in the input. We also provide bounds for the maximum thickness that the algorithm can produce for a given input. We demonstrate our results in parameterized numerical simulations and physical models. 

\section*{Existing Thick Folding Techniques}

There are many existing approaches that seek to account for material thickness in folding applications, each with their own strengths and weaknesses. We discuss the techniques below, which are also illustrated in order in Figure \ref{fig:hinge}. 

\begin{figure}
\centerline{\psfig{figure=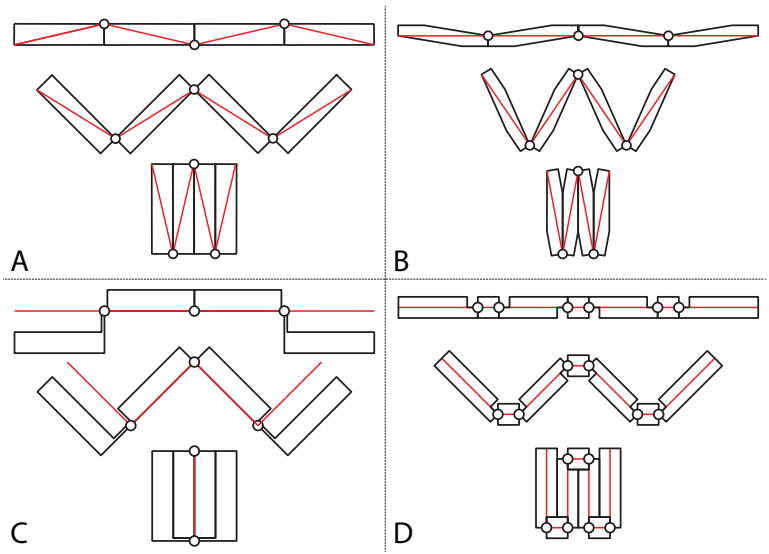,width=4in}\vspace{1pc}}
\caption{Some existing thick folding techniques: (A) Hinge Shift, (B) Volume Trimming, (C) Offset Panel, and (D) Offset Crease. }
\label{fig:hinge}
\end{figure}

\subsection*{A: Hinge Shift}
The hinge shift strategy shifts hinges out of plane to accommodate material thickness~\cite{hoberman2010folding}. While readily useful in creating one-dimensional foldings of thick material, this technique is harder to apply to 2D crease-pattern networks. Hinges start out of plane so cannot build on existing design techniques starting from a coplanar folding pattern. In addition full range of folding motion is restricted. \textcolor{red}{A recent approach extends the idea of hinge shifting to higher degree crease pattern vertices, but this method is geometrically restrictive in the angles and thicknesses allowed~\cite{Chen}.}

\subsection*{B: Volume Trimming}
The strategy presented in~\cite{Tachi2} trims the edges of a thickened surface to overcome many of the difficulties of the hinge shift technique. However, this method also suffers from decreased range of motion and the slanted surfaces can be difficult to fabricate in practice. 

\subsection*{C: Offset Panel}
The offset panel technique~\cite{byu} is probably the most promising in application because it is very flexible while accommodating full range of motion. This method retains hinges at the folding plane but shifts the thick material away from the folding plane. While promising, fabricating such structures can be difficult requiring robust standoffs to connect thick material to hinges. 

\subsection*{D: Offset Crease}
In this paper we expand on the ideas presented in~\cite{zirbel2013accommodating} which accommodates material thickness by widening creases with flexible material, \textcolor{red}{creating a hinge from a two-dimensional region of material.} We propose a modification of the offset crease technique that widens creases in a systematic way, \textcolor{red}{replacing each crease with two ideal hinges} without relying on flexible materials. While this technique does not preserve exact structure of the input crease pattern, it creates a structure that can be easier to fabricate than other techniques. \textcolor{red}{Additionally, the proposed technique allows original facets to be parallel in both flat and folded configurations, potentially allowing for alignment of surface mounted components.} We describe this technique in detail in the following sections, concentrating first on definitions and then the algorithm itself.

\textcolor{red}{Related to the proposed method are a few other methods for accommodating material thickness. A patent by Hoberman~\cite{hoberman1991reversibly} offsets creases in a non-parallel way to accommodate thickness, but also suffers from decrease range of fold angle and does not natural handle crease patterns with internal vertices. Still other methods involve adding degrees of freedom by allowing faces to slide longitudinally along creases, but can be quite difficult to fabricate~\cite{trautz2010deployable}.}

\section*{Definitions}
We would like to take as input a surface that has been folded flat and output a ``thickened'' version. In order to perform this task, we must first specify the input precisely, namely the flat folded state. We will describe input flat folded states by way of crease patterns and \textcolor{red}{non-wrapping} layer ordering graphs. 

Let a \emph{crease pattern} $\Xi$ be a finite straight-line planar graph embedding in $\mathbb{R}^2$. Call crease-pattern edges \emph{boundary edges} if they bound the exterior face, and call them \emph{creases} otherwise. Similarly, call crease-pattern vertices \emph{exterior} if they bound the exterior face with all other vertices \emph{interior}. When we speak of angles around an interior vertex $v$, we are referring to the cyclically ordered set of angles between adjacent edges connected to $v$. A crease pattern is said to be \emph{locally flat-foldable} if the alternating sum of angles around every interior vertex is zero. As discussed later, we will also restrict locally flat-foldable crease patterns to have only convex interior faces.

Certainly if we are given as input a flat folded surface, the network of creases on the unfolded surface define a crease pattern which will be locally flat foldable. The next thing to pin down is the ordering of layers in the folded state.

Given a locally flat-foldable crease pattern $\Xi$, a \emph{flat mapping function} $f_\Xi:\Xi\rightarrow\mathbb{R}^2$ is a piecewise isometric mapping under which each interior face of $\Xi$ is congruent, interior faces that share an edge in $\Xi$ share the same edge in $f_\Xi(\Xi)$, and exactly one of any two adjacent interior faces in $f_\Xi(\Xi)$ is reflected from its orientation in $\Xi$ (i.e. each crease has been folded). This function uniquely exists for a locally flat-foldable crease pattern up to isometry (see Figure~\ref{fig:def}).

\begin{figure}
\centerline{\psfig{figure=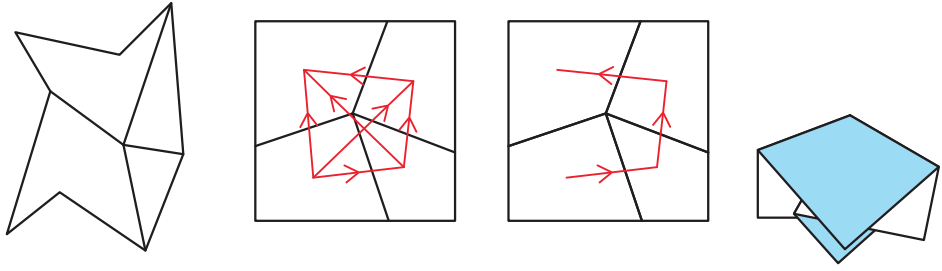,width=5in}}
\caption{From left to right: (1) generic crease pattern $\Xi_0$, (2) locally flat foldable crease pattern $\Xi$ with layer ordering graph $\Lambda$, (3) with reduced layer ordering graph $\Gamma$,  and (4) flat folding $f_\Xi(\Xi)$.}
\label{fig:def}
\end{figure}

Here we adapt the work on layer ordering presented in \cite{TreeMaker_OSME2006}. Given an existing flat folded surface with crease pattern $\Xi$ a \emph{layer ordering graph} $\Lambda$ is a directed graph on the faces of $\Xi$ with an edge between faces $A$ and $B$ if and only if there exists some points $a\in A$ and $b\in B$ such that $f_\Xi(a) = f_\Xi(b)$ (the faces overlap in the folding). The direction of the edges in the directed graph are given by arbitrarily calling the surface normal of some face in the flat folding `up' and drawing edges to point to the face on top of the other. Such a layer ordering may not be well defined if faces are not convex (parts of a face may exist above and below another); as such we will restrict ourselves to crease patterns with convex faces for the remainder of the paper. Additionally, constructing the desired face offset folded state will be impossible if the faces of the layering ordering graph contains a directed cycle because some faces could not be ordered. We will thus restrict to only flat folded surfaces with acyclic layer ordering graphs whose faces can be partially ordered. 

Layer ordering graphs can be very complicated, typically containing edges on the order of the squared number of crease-pattern faces. However, they often contain significant redundancy with respect to providing layer ordering information. For example, consider an edge of a layer ordering graph $(A,B)$ from crease-pattern face $A$ to $B$ ($B$ is on top of $A$), for which there exists some other directed path $L$ from $A$ to $B$. Transitivity ensures that $L$ enforces the ordering condition imposed by $(A,B)$, so edge $(A,B)$ is redundant and can be removed from the graph without losing any layer ordering information. We then implicitly construct the \emph{reduced layer ordering graph} $\Gamma$ from the layer ordering graph $\Lambda$ by identifying any such redundant edge and removing it from the graph. This process terminates and results in a unique output since it is a transitive reduction.


Lastly, we define a flat folded state $(\Xi,\Gamma)$ as a locally flat-foldable crease pattern together with a reduced layer ordering graph free from self-intersection. Specifically, for any crease $\xi$ bounding faces $A$ and $B$ and a third face $C$ which strictly intersects $\xi$, no directed path exists in the reduced layer ordering graph $\Gamma$ between faces $A$ and $B$ visits face $C$ (face $C$ does not intersect crease $\xi$). This object will serve as the input to our thickening algorithm. Note that a flat-folded state implies a crease assignment to each crease (either mountain or valley) by comparing the orientation and order of faces according to the flat mapping function $f_\Xi$ and $\Gamma$. Further, we call the reflex side of a creased surface the \emph{outside} of the crease, and similarly we call the convex side of a creased surface the \emph{inside} of the crease. 

A restriction on our approach is if two creases in a crease pattern wrap around each other in the flat folded state, specifically if one crease touches the inside of another crease, self intersection can become a problem. We will go into more detail as we describe the algorithm, but for now we will call an input flat folded state \textcolor{red}{\emph{non-wrapping}} if no crease \textcolor{red}{or boundary edge point} of the input touches the inside of another crease.

\section*{Algorithm}

The goal of this paper is to construct a thickened version of a given a \textcolor{red}{non-wrapping} flat folded state $(\Xi,\Gamma)$. The strategy is to offset crease-pattern faces from their flat folded state consistent with their layer ordering and create new creases to accommodate the offset. First, we must define an offset distance between every pair of faces which implies a width for each crease. Second, we construct scalable polygons at each interior crease-pattern vertex from which material will be removed to accommodate widened creases. Third, we refine the polygons to ensure that each effective vertex does not exhibit local self intersection. Fourth, we calculate a range for allowable scale factors such that vertex polygons do not intersect. Fifth, we lay out the new crease pattern with holes having a non-flat folded state according to a chosen scale in the allowable range. Last, we address constructing the thickness of each face based on avoiding local self intersection. Additional adjustments may then be made to account for global self intersections.

\subsection*{Step 1: Crease Width}

The first goal of the algorithm is to specify a width for each crease in a flat folded state $(\Xi,\Gamma)$, with all mutually consistent with the layering order of offset faces. Intuitively, we want to separate the layers of the input by nonzero amounts and assign a crease width based on the distance between adjacent faces. If crease widths are chosen small, we can think of the desired output as an ``almost flat" version of the original that allows for nonzero space between layers. The concept of crease width is related to the same term applied to the one-dimensional stamp folding problem \cite{umesato2011complexity}, but we apply it to 2D flat-foldable crease patterns with sortable layer orderings. For our purposes, given reduced layer ordering graph $\Gamma$ it suffices to choose a positive weight for each directed edge such that given any two interior crease-pattern faces $A$ and $B$, every path from $A$ to $B$ in $\Gamma$ has the same weight sum. We will call such a weight assignment $\omega : \xi\in\Xi \rightarrow \mathbb{R}^+$. 

Such a weight assignment always exists; particularly one can be constructed by choosing an arbitrary linearization of the partial order prescribed by $\Gamma$ to create a total order, and defining the weight along a crease to be the absolute difference between the layer ordering numbers of the crease's incident faces. By giving a weight to each crease of $\Gamma$, we can calculate a crease width for every crease of $\Xi$ by summing the total weight along any path from one face incident to the crease, to the other. 

The choice of $\omega$ can be viewed as a design choice for the algorithm implementer. One might strive to choose an $\omega$ that optimizes some natural metric such as minimizing the maximum thickness of any crease, but the work in \cite{umesato2011complexity} and \cite{Thickness_WALCOM2015} seem to suggest such questions may be NP-hard even for one-dimensional graphs. As such, we do not attempt to optimize the choice of $\omega$ here, and leave the exploration in this area as an open problem.


Once we have assigned a crease width to each crease, the construction involves replacing each crease in the input crease pattern with two parallel creases symmetrically offset about the original, separated at a distance proportional to the assigned crease width. This replacement creates difficulties at crease intersections since the offset creases will no longer converge to a point. Material in the vicinity around each crease-pattern vertex will need to be discarded to accommodate the widened creases. Next, we will discuss the construction of the region to be discarded. 

\subsection*{Step 2: Polygon Construction}

Now that crease widths have been defined, we must interface widened creases with each other in the vicinity of crease-pattern vertices. For each vertex, we construct a polygon that will interface with widened crease lines around the vertex. These polygons will be scalable based on how thick we would like to make the material with respect to the crease pattern, up to a point. We will deal with the allowable range of scaling factor later. First, we must define the geometry of these vertex polygons so they will align with all the crease widths around the vertex. 

\begin{figure}
    \centering
    \begin{floatrow}
      \ffigbox[\FBwidth]{
      	\caption{Polygon construction. A generic internal crease pattern vertex showing relationship between offsets and angles.}
	\label{fig:polygon}}{
        \psfig{figure=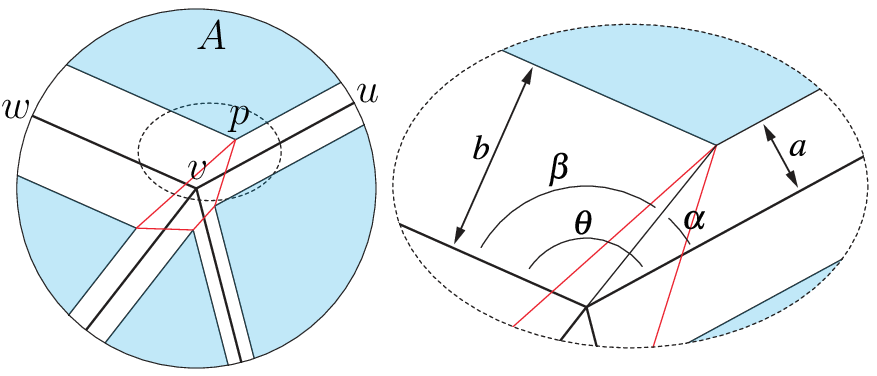,width=0.45\textwidth}
      }
      \ffigbox[\FBwidth]{
      	\caption{A non-simple vertex polygon and refinement by clipping crossings.}
	\label{fig:cross}}{
        \psfig{figure=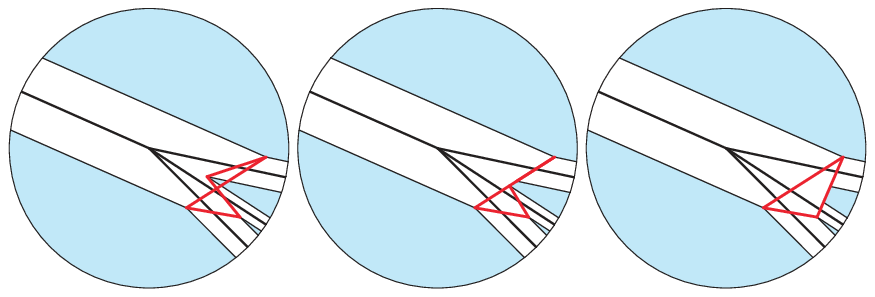,width=0.45\textwidth}
      }
    \end{floatrow}
\end{figure}

We want a vertex polygon to contain one vertex per face adjacent to the crease-pattern vertex at a distance from each adjacent crease proportional to the crease width of the crease. Consider crease-pattern vertex $v$ with face $A$ adjacent to it, bounded by adjacent creases $\{u,v\}$ and $\{v,w\}$ with crease widths $2a$ and $2b$ respectively. Let the angle between these creases be $\theta$. Then the location of the polygon vertex $p$ in this face must be a distance $a$ from crease $\{u,v\}$ and distance $b$ from crease $\{v,w\}$. This point is uniquely defined and can be parameterized by the length $h$ of segment $\{v,p\}$ and the angles $\alpha$ and $\beta$ between this segment and creases $\{u,v\}$ and $\{v,w\}$ respectively (see Figure~\ref{fig:polygon}). Some trigonometry reveals that these angles are given by
\begin{equation}
\tan\alpha = \frac{\sin\theta}{b/a+\cos\theta},\qquad\tan\beta = \frac{\sin\theta}{a/b+\cos\theta}
\end{equation}
with domains $\alpha,\beta\in[0,\pi]$, and $h = a/\sin\alpha = b/\sin\beta$. Repeating this procedure for each face adjacent to an interior crease-pattern vertex constructs points that when connected based on facet adjacency form a polygon.  For exterior crease-pattern vertices, the same construction applies except we include the original vertex and intersections between crease width lines and boundary edges in our polygons. We call the regions in each face bounded by offset creases \emph{reduced faces} (shown in blue in the figures). Unfortunately, edges of a constructed vertex polygon may properly cross as in Figure \ref{fig:cross}. However, we can easily modify the vertex polygon to be weakly simple, or even convex, by clipping any facet sector crossing the polygon. More specifically taking the convex hull of the vertex polygon, mark each vertex whose adjacent reduced face does not properly intersect the convex hull. Trimming the intersecting reduced faces against the convex hull of the marked vertices results in an appropriate convex vertex polygon, though in some cases it may suffice to remove less material (see the middle diagram in Figure \ref{fig:cross}). Note that crossings can only arise if two adjacent vertex angles sum to more than $180^\circ$. The reduced faces of these two angles cannot properly intersect the convex hull of the vertex polygon, so at least two marked vertices exist.

Locally, this polygon divides the area around the vertex into three region types: the polygon, widened creases, and reduced faces (the cardinality of the latter two equaling the number of creases adjacent to the crease-pattern vertex). We will use this terminology to talk about these regions in the following sections.

\subsection*{Step 3: Refinement}

\begin{figure}
    \centering
    \begin{floatrow}
      \ffigbox[\FBwidth]{
      	\caption{Trimming intersecting region.}
	\label{fig:refine}}{
        \psfig{figure=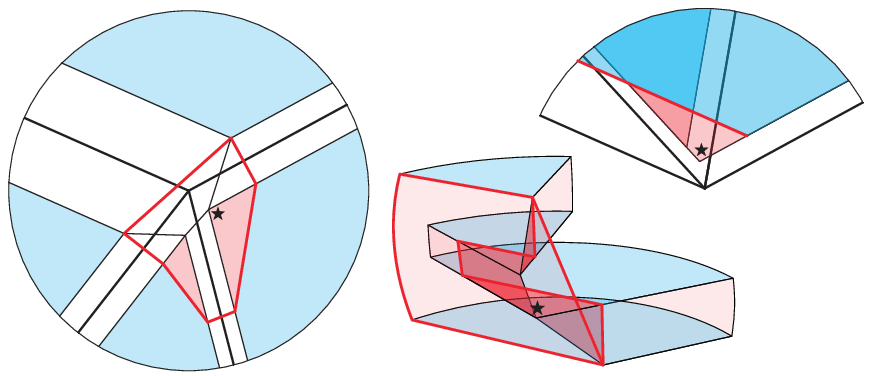,width=0.45\textwidth}
      }
      \ffigbox[\FBwidth]{
      	\caption{Unbounded intersection for inside touching creases in input flat folded state.}
	\label{fig:wrap}}{
        \psfig{figure=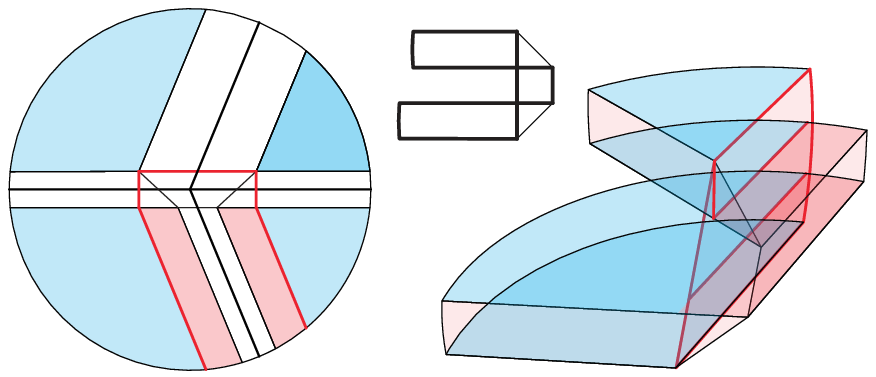,width=0.45\textwidth}
      }
    \end{floatrow}
\end{figure}

The newly constructed creases and polygons in the previous sections serve to locally satisfy isometry between offset faces by removing material at a vertex and adding new creases to accommodate the offset. However, creases with larger crease width require more paper to be absorbed into widened crease regions, reducing the size of surrounding reduced facets. The interaction of this tradeoff between different regions creates the potential for intersection between widened creases and reduced facets. We fix this type of self intersection by checking each widened crease/reduced facet pair for intersection. If they intersect, trim the reduced facet along the widened crease boundary and refine the vertex polygon to reflect this change (see Figure~\ref{fig:refine}). 

There is a worry that this procedure could remove material that is not a bounded distance from the vertex. For example, the crease pattern shown in Figure \ref{fig:wrap} contains two creases that when widened have an intersection that extends to infinity. Fortunately, this type of situation only occurs locally when some crease of the input touches the inside of another crease, which we have forbidden by requiring a \textcolor{red}{non-wrapping} input. Reduced facets can only be trimmed a finite number of times because trimming cannot increase the number of intersections, thus the refinement terminates.

\subsection*{Step 4: Scale Factor}

\begin{figure}
\centerline{\psfig{figure=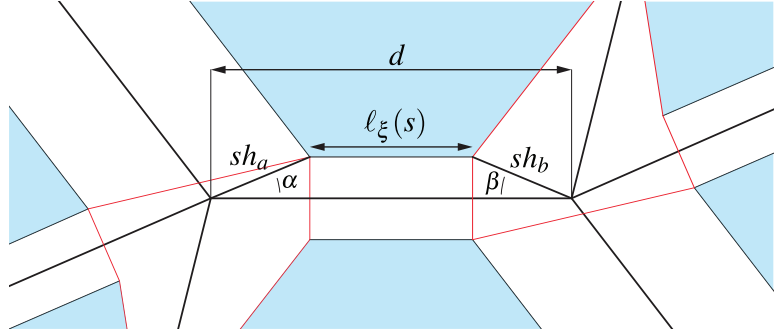,width=4in}}
\caption{Scale factor calculation showing relevant quantities.}
\label{fig:scale}
\end{figure}

After creating vertex polygons and local widened crease/reduced facet regions that locally do not self intersect, we can determine how large these polygons can be before intersecting each other. Each widened crease edge is bounded on either side by a vertex polygon. Consider crease $\xi$ with length is $d$. Then each widened crease edge of $\xi$ is shorter than $d$ according to the size of each incident vertex polygon. Let $(h_a,\alpha)$ and $(h_b,\beta)$ define the locations of the vertex polygon vertices on either side of $\xi$ contained in the same face $F$. If we let the size of all vertex polygons scale by a factor $s$, then the length $\ell_\xi$ of the widened crease segment in $F$ is given by the following function of $s$ (see Figure~\ref{fig:scale}):
\begin{equation}
\ell_\xi(s) = d-s(h_a\cos\alpha + h_b\cos\beta).
\end{equation}
For $(h_a\cos\alpha + h_b\cos\beta)$ negative, $\ell_\xi(s) > 0$ for all $s>0$ so this crease $\xi$ does not restrict scale. For $(h_a\cos\alpha + h_b\cos\beta)$ positive, there exists some $s_\xi$ strictly positive for which $\ell_\xi(s_\xi) = 0$. This event corresponds to neighboring vertex polygons intersecting which we would like to forbid. Taking the minimum $s_\xi$ over all creases $\xi\in \Xi$ yields a strictly positive upper bound $s^*$ on scale factors by which vertex polygons can be scaled without overlap. Note that for $s=0$, the crease pattern is not offset at all and facets remain coplanar, and the folded form cannot be produced with material of any finite thickness. Strictly positive $s$, such as $s^*$ calculated above, allow the modified pattern to accommodate some finite thickness, with a larger $s$ accommodating a larger thickness relative to the geometry of the input crease pattern. Of course this scale $s^*$ only insures that vertex polygons do not interact. It is possible that 
with this calculated scale, global intersection between faces of the folding can still arise. Nonetheless, we show the following:

\begin{theorem}
Given a \textcolor{red}{non-wrapping} flat folded state $(\Xi,\Gamma)$ and weight assignment $\omega : \xi \in \Xi\rightarrow\mathbb{R}^+$, 
there exists some positive non-zero scale $s$ for which the above construction globally contains no strict intersection
between faces in the three-dimensional folded state.
\end{theorem}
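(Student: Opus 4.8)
I would prove that the conclusion holds for every sufficiently small $s>0$; choosing such an $s$ also below $s^{*}$ then keeps the vertex polygons of Step~4 disjoint, so the whole construction remains valid. The governing idea is that as $s\to 0^{+}$ the three-dimensional folded state degenerates to the input flat folded state $f_\Xi(\Xi)$, which is self-intersection-free by hypothesis, with the layers exploded apart vertically in proportion to $s$ according to the layer order. First I would record a normal form for the output: put $f_\Xi(\Xi)$ in the plane $z=0$ and let $\ell(\bar F)$ be the layer number of face $\bar F$ coming from a linearization of $\Gamma$ (equivalently the $\omega$-weight sum to a fixed reference face, so that the crease width of a crease $\xi$ with incident faces $A_\xi,B_\xi$ equals $W_\xi=|\ell(A_\xi)-\ell(B_\xi)|$). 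Then at scale $s$ the reduced face of $\bar F$ is a horizontal region at height $z=s\,\ell(\bar F)$ whose footprint is a closed subset of $f_\Xi(\bar F)$ obtained by deleting $O(s)$-neighborhoods of the crease-pattern vertices, and each widened crease $Q_\xi$ is a vertical rectangle lying over a sub-segment of the line through $f_\Xi(\xi)$, offset from that line by $\Theta(s)$ toward one incident face, with vertical extent $s\,W_\xi$. Since interior-to-interior incidence of two faces is unchanged by rescaling the $z$-axis, it is enough to analyze the $z$-rescaled object, whose $s\to 0$ limit is nondegenerate.

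Next I would dispose of the two easy families of face pairs. If two reduced faces meet interior-to-interior, their footprints overlap, so $\bar F_1$ and $\bar F_2$ overlap in $f_\Xi$; then $\Gamma$ carries a directed path between them, so $\ell(\bar F_1)\ne\ell(\bar F_2)$ and the two horizontal regions sit at distinct heights — a contradiction (no smallness of $s$ needed). If a reduced face $R_{\bar F}$ meets a widened crease $Q_\xi$ interior-to-interior, then for $s$ small (absorbing the $\Theta(s)$ offset) the interior of $f_\Xi(\bar F)$ contains a point of the crease segment $f_\Xi(\xi)$, so $\bar F$ strictly intersects $\xi$, and $s\,\ell(\bar F)$ lies strictly between $s\,\ell(A_\xi)$ and $s\,\ell(B_\xi)$; but $\bar F$ overlaps both $A_\xi$ and $B_\xi$, so composing the two $\Gamma$-paths (each oriented consistently with $\ell$) yields a directed $A_\xi\to B_\xi$ path through $\bar F$, contradicting the self-intersection-free clause in the definition of a flat folded state.

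The remaining family — a widened crease $Q_\xi$ against a widened crease $Q_{\xi'}$ — is where the substance lies. If the two crease lines $f_\Xi(\xi)$ and $f_\Xi(\xi')$ meet in at most a point (this includes two creases incident to a common crease-pattern vertex: the widened rectangles are pulled back from that vertex, and the corresponding vertex polygon lies between them, so for $s\le s^{*}$ they are already separated there), then for $s$ small the two offset vertical rectangles are at positive horizontal distance and cannot meet. The only remaining possibility is that $f_\Xi(\xi)$ and $f_\Xi(\xi')$ overlap along a nondegenerate segment. Here convexity of faces is used first: distinct edges of a convex polygon cannot overlap along a segment, so $\xi$ and $\xi'$ share no face, and the four faces $A_\xi,B_\xi,A_{\xi'},B_{\xi'}$ are distinct, each carrying the overlap segment on its boundary. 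Tracking which side of the common line each of these faces occupies and which incident face each widened rectangle is offset toward, a strict intersection of $Q_\xi$ and $Q_{\xi'}$ is seen to force, in the rescaled $s\to 0$ limit, a point of one crease to lie in the \emph{open} inside (convex side) of the other in $f_\Xi$ — exactly the configuration ruled out by the non-wrapping hypothesis. Hence for $s$ small no two widened creases meet interior-to-interior either.

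The step I expect to be the main obstacle is precisely this last case: faithfully relating a strict overlap of the two exploded crease rectangles to the ``$\xi'$ touches the inside of $\xi$'' condition requires careful bookkeeping of the induced crease assignments (valley/mountain), of the sides of each crease in $f_\Xi$, and of how the layer intervals $[\ell(A_\xi),\ell(B_\xi)]$ and $[\ell(A_{\xi'}),\ell(B_{\xi'})]$ may be nested or interleaved over the common segment, and this is where all of the hypotheses — non-wrapping, convex faces, acyclicity of $\Gamma$, and the induced crease assignment — are used together. A secondary, more technical nuisance is legitimizing the small-$s$ limit argument even though the limit configuration has faces that meet along shared edges: one must check that each such contact is a clean edge-to-edge contact which, as $s$ varies near $0$, never opens into an interior crossing, so that the interior-disjointness established in the limit is an open condition in $s$. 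Collecting the finitely many smallness thresholds produced above and intersecting with $(0,s^{*}]$ yields the claimed positive scale $s$.
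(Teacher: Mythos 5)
Your proposal follows the same underlying strategy as the paper's proof: both argue that as $s\to 0^{+}$ the construction degenerates to the self-intersection-free input flat folded state, so that for some (indeed every sufficiently small) positive $s$ no strict intersection occurs, with acyclicity of $\Gamma$ disposing of face--face pairs and the non-wrapping hypothesis handling pairs involving widened creases. The difference is one of rigor rather than route. The paper runs the argument by contradiction through a distance function $d(s)$ between a widened crease and an offending face, asserting $d(0)>0$ because the crease and the face do not meet in the input; your case analysis correctly isolates exactly where that assertion is too quick --- two creases whose images under $f_\Xi$ overlap along a segment (outside-to-outside contact, which non-wrapping permits), or a widened crease abutting a reduced face along its own boundary edge --- since there the $s=0$ distance is zero and one must instead check that the offsets separate the pieces in the correct direction, which is where non-wrapping, convexity, and the layer intervals all enter. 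Your three-way decomposition, and your use of the self-intersection-free clause of $(\Xi,\Gamma)$ for the reduced-face/widened-crease case, makes explicit what the paper's brief dismissal of ``reduced polygon faces'' leaves implicit. The bookkeeping you flag as the main remaining obstacle is indeed the substantive content of the theorem, and the paper does not supply it in any more detail than you do; your proposal is therefore at least as complete as the published argument, and more honest about where the work lies.
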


\begin{proof}
Suppose for contradiction that the construction produces intersecting faces for every positive non-zero scale $s$.
Intersection cannot occur between reduced polygon faces because
they are offset from each other in a way consistent with the input \textcolor{red}{non-wrapping} reduced layer ordering graph
containing no self-intersection. Thus, any face-face intersection must exist between a widened crease face and some other
face. Let $\xi$ be the original crease corresponding to some widened crease face strictly interesting another face $F$.
Because the input is \textcolor{red}{non-wrapping} containing no self-intersection, $\xi$ does not intersect $F$ in the input \textcolor{red}{non-wrapping} flat folded state.
Increasing the scale $s$ from $0$ and performing the above construction results in a continuous parameterized 
family of three-dimensional foldings. More importantly, let $d(s)$ be the minimum distance between the widened 
crease associated with crease $\xi$ in a construction with scale $s$ and the reduced polygon formed from face $F$. 
Then $d(s)$ is positive for $s=0$ and varies continuously and weakly monotonically with $s$.
Thus there exists some positive non-zero scale $s'\in(0,s^*)$ for which $d(s')$ is also positive. But this is true for
every intersection involving a crease, so there must exist some scale where no intersection occur, a contradiction.
\hfill $\square$
\end{proof}

\subsection*{Step 5: Final Construction}
Now given a flat folded state $(\Xi,\Gamma)$ and width assignment $\omega$, we can calculate the upper bound $s^*$ on scale to forbid vertex polygon intersection and choose a scale $s'$ in the range $(0,s^*)$ to construct a modified crease pattern that avoids self intersection. Quite simply the construction is placing vertex polygons scaled by $s'$ and adding widened crease lines parallel to the original creases between vertex polygons. The entire process is shown in Figure \ref{fig:construct}: \textcolor{red}{first the input non-wrapping flat folded state, offset facets, and finally the offset polygons, together with their counterparts in the folding domain. }

\begin{figure}
    \centering
    \begin{floatrow}
      \ffigbox[\FBwidth]{
      	\caption{Construction process.}
	\label{fig:construct}}{
	\psfig{figure=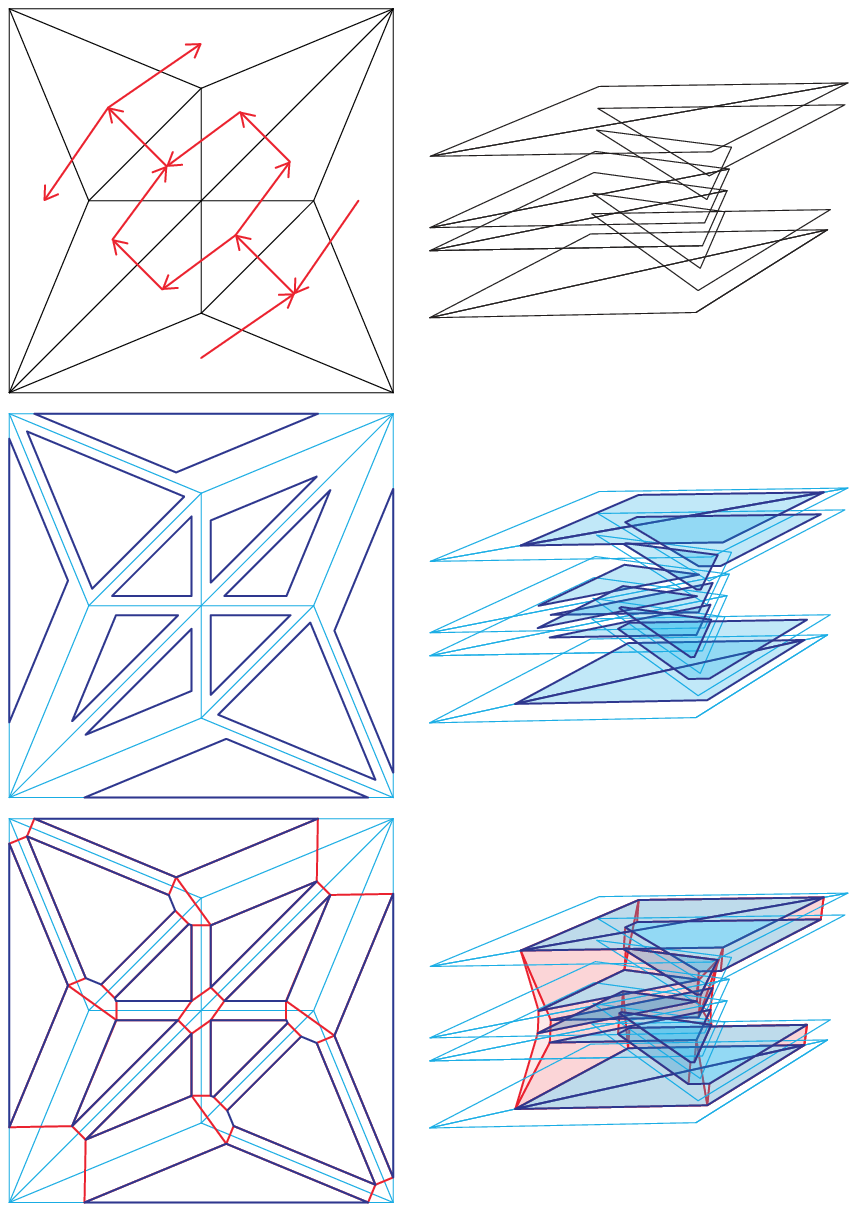,width=0.45\textwidth}
      }
      \ffigbox[\FBwidth]{
      	\caption{Foam core model of a thickened traditional bird base.}
	\label{fig:model}}{
        \psfig{figure=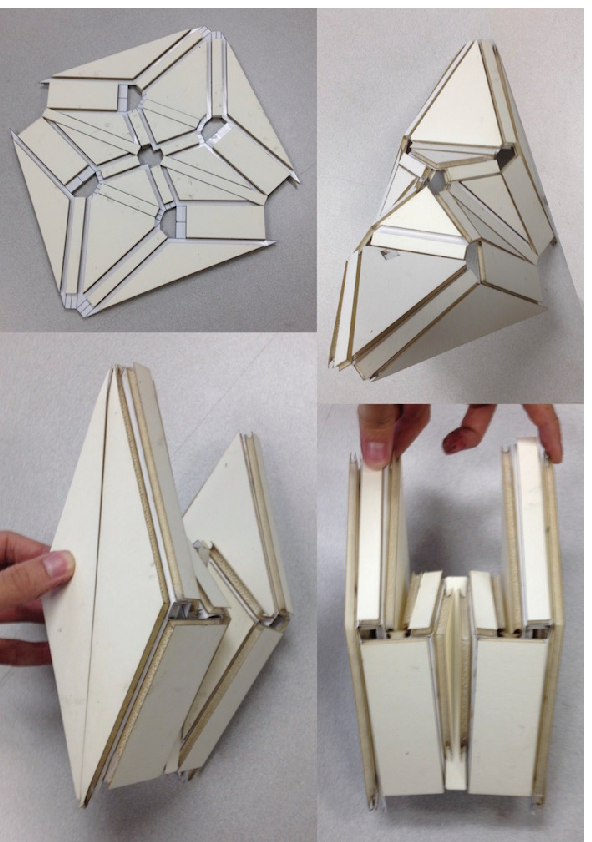,width=0.45\textwidth}
      }
    \end{floatrow}
\end{figure}

\begin{theorem}
Given a \textcolor{red}{non-wrapping} flat folded state $(\Xi,\Gamma)$ and weight assignment $\omega : \xi \in \Xi\rightarrow\mathbb{R}^+$, 
the construction above terminates in polynomial time.
\end{theorem}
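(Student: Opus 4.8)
The plan is to walk through the construction --- Steps~1 through~5 together with the concluding per-face thickening --- and bound each phase by a polynomial in the total input size $n$, where $n$ counts the vertices, edges, and faces of $\Xi$ together with the edges of $\Gamma$ and the description of $\omega$. Of the five steps, four (Steps~1, 2, 4, and~5) are essentially ``one-shot'': each performs only a constant amount of closed-form computation per combinatorial feature, plus at most a topological sort or a batch of convex-hull computations, and the concluding thickening is similar; the only phase whose running time is not immediate is the refinement of Step~3, where the number of trimming iterations must be bounded.

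First I would dispatch the easy phases. In Step~1, the consistency of $\omega$ means the path-sums defining crease widths are realized by a single potential function on faces, obtained from one topological sort of $\Gamma$ followed by a linear scan, so Step~1 costs $O(n\log n)$. In Step~2, for each interior vertex $v$ of degree $d_v$ we place one polygon vertex per incident face via the closed-form expressions for $\alpha$, $\beta$, and $h$ (constant work each), and then make the vertex polygon convex by one convex-hull computation on its $d_v$ vertices plus a marking-and-clipping pass; exterior vertices additionally require intersecting $O(d_v)$ crease-width lines with boundary edges. Since $\sum_v d_v = O(|E|)$, Step~2 costs $O(|E|\log|E|)$. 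In Step~4 we evaluate $h_a\cos\alpha+h_b\cos\beta$ for each crease, solve the linear equation $\ell_\xi(s)=0$ in closed form whenever that quantity is positive, and take the minimum, all in $O(|E|)$ time; then we may fix, say, $s'=s^*/2$. Step~5 merely emits the vertex polygons scaled by $s'$ together with the widened-crease strips joining consecutive polygons, an output of size $O(|E|)$. The concluding per-face thickening is a local computation on each crease-pattern face bounded by that face's own complexity, hence also polynomial.

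The crux is Step~3. I would charge the trims against the planar arrangement $\mathcal{A}$ formed by the $O(|E|)$ offset-crease lines, the edges of the $O(|V|)$ convex vertex polygons built so far, and the edges of the $O(|F|)$ crease-pattern faces; since these comprise $O(|E|)$ lines and segments in total, $\mathcal{A}$ has $O(|E|^2)$ cells. At every stage of the refinement each reduced facet is a union of cells of $\mathcal{A}$ (its boundary runs only along offset-crease lines, vertex-polygon edges, or crease-pattern edges), and distinct reduced facets occupy disjoint cells, so at most $O(|E|^2)$ reduced facets ever coexist even though a trim can split one. A trim against the widened crease of a crease $\xi$ deletes from some reduced facet exactly its intersection with the open strip of that widened crease; because the input is \textcolor{red}{non-wrapping} this region is bounded --- the unbounded situation of Figure~\ref{fig:wrap} is exactly what is ruled out --- so each trim is a single polygon-strip clipping on inputs of polynomial size. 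The key point, already observed in the discussion of Step~3, is that trimming only removes material and never enlarges a facet: the cells of $\mathcal{A}$ deleted in a trim against $\xi$ lie inside $\xi$'s strip and can never reappear, so each cell inside that strip is deleted by a trim against $\xi$ at most once. Hence at most $O(|E|^2)$ trims are charged to each of the $O(|E|)$ widened creases, giving $O(|E|^3)$ trims overall, and Step~3 --- and therefore the whole construction --- runs in polynomial time. The one place needing care in a full write-up is precisely this monotonicity-and-charging argument: confirming that even a split reduced facet still consists of arrangement cells, and that no trim can resurrect material removed by an earlier one, which is where the \textcolor{red}{non-wrapping} hypothesis does its work.
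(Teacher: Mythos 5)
Your proposal is correct, and its overall shape matches the paper's proof: both walk through Steps 1--5 (plus the concluding thickening) and bound each phase by a polynomial in the input size, with Steps 1, 2, 4, and 5 dispatched by constant closed-form work per combinatorial feature. Where you genuinely diverge is the analysis of the Step 3 refinement. The paper handles it in one sentence: because trimming only removes material, a trimmed reduced facet is a subset of its former self and cannot acquire new intersections (this is the ``trimming cannot increase the number of intersections'' remark already made in Step 3), so a single pass over all widened-crease/reduced-facet pairs around each vertex suffices, giving quadratic time for Step 3 and a quadratic bound overall. You instead rebuild a termination bound from scratch via an arrangement-cell charging argument --- every reduced facet remains a union of cells of the $O(|E|^2)$-cell arrangement of offset lines, polygon edges, and face edges, and no deleted cell is ever resurrected --- yielding $O(|E|^3)$ trims. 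Your argument is sound, and it correctly isolates the non-wrapping hypothesis as what keeps each individual trim a bounded clipping operation, but it is more machinery and a looser bound than necessary: the cell-level monotonicity you establish is exactly the paper's subset observation, which already implies that one quadratic sweep over pairs finds and resolves every local intersection. Both routes establish polynomial termination; the paper's buys the sharper $O(n^2)$ running time it claims, while yours buys an explicit accounting of why the iterative check-and-trim loop cannot cycle even when a trim splits a reduced facet.
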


\begin{proof}
Given the weight assignment, the vertex polygons are each bounded, 
constructed as described in Step 2 by offsetting the
original geometry by finite amounts and connecting vertices,
which can be constructed directly in linear time. 
Clipping ensures the vertex polygons are weakly simple and can
be performed naively by comparing each vertex-edge pair in quadratic time.
Trimming in Step 3 can also be implemented in quadratic time by checking
each pair of faces locally around a vertex. Local intersections of the faces around
a vertex are thus removed in the trimming step by construction. Calculating the
scale upper bound $s^*$ guaranteeing that vertex polygons do not intersect 
requires a constant-sized evaluation per edge, while
calculating an appropriate $s'$ can be computed by evaluating the appropriate
scale for each possible intersection pair in at most quadratic time, 
and choosing the minimum scale. Thus the procedure
can be implemented to terminate in quadratic time which is polynomial. \hfill $\square$
\end{proof}

\subsection*{Adding Thickness}
The above construction creates a modified thin crease pattern that separates faces in the folded form to make room for thick panels. Adding material to the constructed thin surface is relatively easy. In general, if crease widths are chosen arbitrarily, facets can be assigned a range of thicknesses to either side that can be accommodated by the crease widths. However, a simpler and more practical assignment might be to assign the same max thickness to the entire crease pattern as many manufacturing processes could benefit from this kind of uniformity (nano-fabrication, sheet metal construction, etc.). We can simply define the max panel thickness $t_{max}$ as the smallest crease width assigned to the flat folded state. 

However, this panel thickness cannot be added everywhere or material would self-intersect. 
For example, if finite panel thickness exists everywhere on adjacent faces on the inside valley side of each crease, 
the crease would not be able to fold at a right angle without the added material intersecting when folded. 
There are many ways to solve this problem by removing material. 
We suggest keeping full panel thickness on widened crease regions to strengthen these traditionally weak interfaces,
and removing material from the adjacent face incident to the crease. To accommodate widened crease panel thickness on both sides, we must remove a strip of material of width $t_{max}/2$ on either side of the widened crease from the reduced facets adjacent to the crease, only on the crease's inside surface. This modification will ensure that material in the vicinity of creases do not locally self-intersect. 

The problem of global material self intersection during a folding motion is a more difficult computational task, though there are existing computational methods for addressing this issue. The offset panel techniques of \cite{byu} also point out this problem. We are looking into more efficient techniques to perform global folding motion collision detection to aid real-world design applications. 

\section*{Models}

\begin{figure}
    \centering
    \begin{floatrow}
      \ffigbox[\FBwidth]{
      	\caption{Numerical folding simulation of two thickened crease patterns using Freeform Origami.}
	\label{fig:freeform}}{
	\psfig{figure=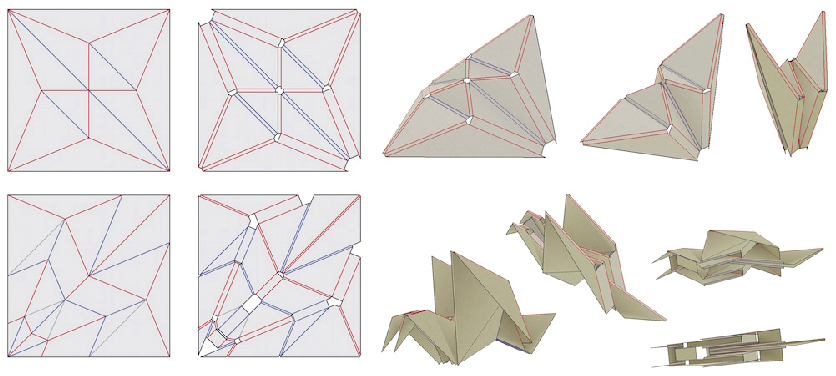,width=0.5\textwidth}
      }
      \ffigbox[\FBwidth]{
      	\caption{Parameterized thick single vertex construction in Mathematica.}
	\label{fig:mathematica}}{
        \psfig{figure=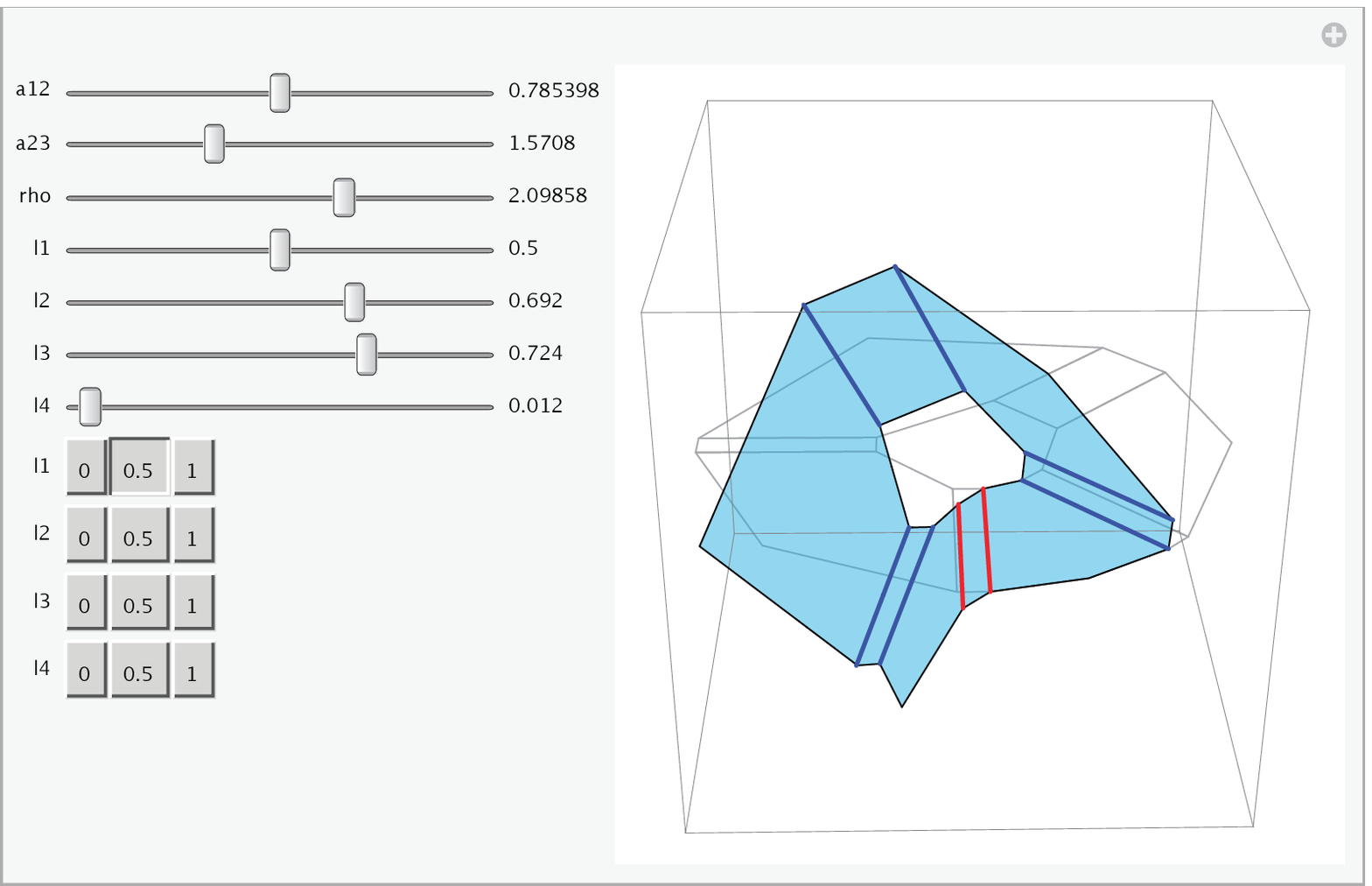,width=0.4\textwidth}
      }
    \end{floatrow}
\end{figure}

We developed numerical and physical models to demonstrate the algorithm presented above. We used the algorithm described to modify two existing rigid-foldable flat-foldable crease patterns, the traditional bird base and a modified rigid foldable flapping bird designed by Robert Lang as shown in Figure \ref{fig:freeform} \textcolor{red}{(bird base on top, and flapping bird on bottom). On the left the original and modified crease patterns are shown, followed by snapshots of each crease pattern folding.} These modified crease patterns were input into a numerical origami simulator called Freeform Origami \cite{Tachi_freeform}. This simulator is able to fold a crease pattern incrementally through rigid folding configuration space while seeking to maintain developability and planarity constraints converging iteratively to within double precision.  Folding these crease patterns in the simulator demonstrated multiple rigid folded states throughout the folding process to very high accuracy. These simulations provide evidence that a path through the configuration space exist for complex crease patterns between the unfolded and folded states produced by this algorithm. \textcolor{red}{Such a movement seems possible for single-vertex crease patterns because the number of degrees of freedom of the modified structure should in general increase.}

We also used a Mathematica model shown in Figure \ref{fig:mathematica} to apply the algorithm to single-vertex crease patterns to try and find a path in the folding configuration space between the unfolded and folded states produced by this algorithm. \textcolor{red}{The model allows the user to change the parameters of the system, namely fold angles between creases and splitting ratios between offset crease pairs, in order to satisfy closure. While we have not found an analytic closure constraint relating fold angles and splitting ratios, we have been able to achieve closure numerically to double precision for a range of inputs.} Our results in this area are preliminary, but we have experimental evidence to support that single-vertex crease patterns thickened with this algorithm have a rigid foldable path between unfolded and folded states. We conjecture that the state space for thickened single-vertex crease patterns is a sphere embedded in the multidimensional parameterized space and will leave further discussion in this area to future work. 

Lastly, a physical model of a thickened version of the traditional bird base was fabricated using 3/8'' foam core pasted on either side of thin paper. Some views of the physical model can be seen in Figure \ref{fig:model}. The folding action observed with this model agrees well with the folded states of numerical simulation, and the motion feels tightly constrained in contrast to the folding mechanisms described in \cite{zirbel2013accommodating}. Empirically fixing the dihedral angle between sector faces while adjusting the angle ratio at one crease, a continuous adjustment of the other crease ratios was observed, also supporting the spherical configuration space conjecture.

\section*{Conclusion}

In this paper we have presented a new method for creating thick folded structures from flat folded states. The algorithm proposed has many benefits over existing thick folding techniques. Facet surfaces in the produced structure's unfolded state are coplanar allowing for ease of fabrication in layer-by-layer manufacturing processes. These same surfaces are parallel in the produced structure's folded state allowing any surface mounted components to mate naturally. Panel thicknesses can be adjusted according to material and scale within bounds provided by the algorithm. Further, every finite area of the algorithm's produced surface may be assigned non-zero thickness, allowing for the production of strong and tightly constrained mechanisms.  

The offset crease method provides a thickened folded state suggesting a full range of folding motion as well as parallel facets when fully folded. Assigning crease widths to comply with the acyclic layer ordering of the input flat folded state provides a flexible design space for varied applications, while still constructing one non-trivial folded state with planar facets. While it is still open whether a path of rigid folded states exists through the configuration space in general, there is evidence that one exists for single vertex crease patterns given our numerical models. While compensating for material thickness is not as difficult for non-flat foldings, because faces do not meet each other when folded, we are exploring ways of extending this method for non-flat foldings, particularly those containing face-to-face contact in their non-flat folded form.

\bibliographystyle{asmems4}

\begin{acknowledgment}
Supported in part by NSF ODISSEI grant EFRI-1240383 and NSF Expedition grant CCF-1138967. 
\textcolor{red}{We also thank our anonymous reviewers for helpful comments and additional references.}
\end{acknowledgment}

\end{document}